\newtheorem{theorem}{Theorem}
\let\oldforeign@language\foreign@language
\DeclareRobustCommand{\foreign@language}[1]{%
	\lowercase{\oldforeign@language{#1}}}
\newcommand{\thickhline}{%
	\noalign {\ifnum 0=`}\fi \hrule height 1.5pt
	\futurelet \reserved@a \@xhline
}
\newcolumntype{"}{@{\hskip\tabcolsep\vrule width 1pt\hskip\tabcolsep}}
\begin{document}
	%
	\title{False Data Injection Attacks on Phasor Measurements That Bypass Low-rank Decomposition}

	
	
	%
	\author{\IEEEauthorblockN{
			Jiazi Zhang,
			Zhigang Chu,
			Lalitha Sankar,
			and	Oliver Kosut}
		\IEEEauthorblockA{School of Electrical, Computer and Energy Engineering\\
			Arizona State University\\
			Tempe, AZ, 85287}}


	\maketitle
\pagestyle{plain}
	\begin{abstract}
	This paper studies the vulnerability of phasor measurement units (PMUs) to false data injection (FDI) attacks.  Prior work demonstrated that unobservable FDI attacks that can bypass traditional bad data detectors based on measurement residuals can be identified by detector based on low-rank decomposition (LD). In this work, a class of more sophisticated FDI attacks that captures the temporal correlation of PMU data is introduced. Such attacks are designed with a convex optimization problem and can always bypass the LD detector. The vulnerability of this attack model is illustrated
	on both the IEEE 24-bus RTS and the IEEE 118-bus systems.
	\end{abstract}
	

	%
	\IEEEpeerreviewmaketitle
	\global\long\def\figurename{Fig.}
	\global\long\def\tablename{TABLE}

	\section{Introduction}
	The electric power system is monitored via an extensive network of sensors in tandem with data processing algorithms, \textit{i.e.}, an intelligent cyber layer, that enables continual observation and control of the physical system. 
	In the past decade, phasor measurement units (PMUs) have been widely deployed in the electric power system to directly measure the bus voltages and phase angles. Due to its high sampling rate and accuracy, PMU has the potential to play a significant role in real-time power system state estimation (SE) \cite{Phadke_PMU1986}, dynamic security assessment \cite{Vittal2007, DSA_PMU2012}, system protection \cite{PMU_application2010}, and system awareness \cite{PMU_monitoring2010}.

	In recent years, several incidents \cite{Zeller2011,StuxnetIran,Siemens14plants,UkraineAttack} have demonstrated that the cyber layer of power system is vulnerable to cyber-attacks that impact the system operation status and lead to serious physical consequences. As increasingly important monitoring devices, PMUs are also prone to cyber-attacks. Therefore, it is crucial to evaluate the vulnerability of PMUs to potential cyber-attacks. In \cite{Lin_PMUcybersecurity2012,Beasley2014_PMU}, the authors study the cyber-security of PMU-based SE and classify the potential cyber-attacks on PMUs as communication link damage attacks, denial of service attacks, and data spoofing attacks including GPS spoofing attacks and false data injection (FDI) attacks.  
	
	In this paper, we focus on FDI attacks, wherein the attacker replaces a subset of PMU measurements with counterfeits. The most effective FDI attacks are those which are unobservable to SE. It has been established in \cite{Liu2009,Kosut2011,Hug2012,Liang2014} that FDI attacks,  when designed appropriately, can be unobservable to both DC SE and AC SE that use traditional SCADA measurements. Such attacks can be designed to specifically have a financial impact (\textit{e.g.}, on the electricity market such as in \cite{Yuan11}, \cite{Yuan2012}) or worse yet a physical impact on the system (\textit{e.g.}, \cite{Liang2015}, \cite{Zhang2016TSG}).

	To thwart unobservable FDI attacks, several protection mechanisms and attack detection approaches have been introduced recently. In \cite{Kim_PMU2013}, Kim and Tong introduce an approach to ensure system observability by placing secure PMUs so as to protect the system from FDI attacks. However, since PMU measurements can also be changed by attacker, this method cannot eliminate FDI attacks when PMUs are compromised by attackers. 
	In \cite{StatisticalFIDdetection_2013}, the authors propose a decentralized detection scheme for FDI attacks based on the Markov graph of bus phase angles. However, this method might not work well when the system experiences a disturbance.
	In \cite{Kundur_PMU2014}, Lee and Kundur introduce a detector based on Expectation-Maximization to detect FDI attacks in PMU measurements. This method needs to be solved iteratively and the convergence rates are very slow for real-time detection (\textit{e.g.,} $10^5$ iterations) for even a small test system.
	
	\textit{Related Work}: Recently, using measurements obtained from deployed PMUs in the grid, \cite{DahalPMU2012} and \cite{ChenPMU2013} illustrate the low-rank nature of PMU data. These approaches suggest that PMU measurements can be modeled as a matrix to capture both the temporal aspects (\textit{e.g.}, via the rows of the matrix) and the spatial aspects (for each time instant via the columns).

	Recently in \cite{Liu_SparseOptimization_2014,Wang_SmartGridComm2014_PMU,Gao2016_PMUIdentification}, low-rank decomposition (LD) has been proposed to detect FDI attacks on the electric power system using a block of consecutive measurement data. 
	On the other hand, the FDI attacks of most interest are those in which the attacker is not omniscient and omnipresent --- this limited knowledge and limited capabilities of FDI attacks are often captured (see, for \textit{e.g.}, \cite{Liu2009,Teixeira2010,Kosut2011,Liang2015,Zhang2016TSG,ZhangPES2016,Chu2016SmartGridComm,Zhang2017_TPS}) by restricting attacker knowledge to a subset of the network and restricting counterfeits to a small number of meters, respectively. This latter restriction along with the above mentioned low-rank properties of a block of PMU data suggests that the resulting counterfeit PMU measurement matrix can be viewed as a linear combination of a low-rank (actual) measurement matrix and a sparse attack matrix (counterfeit additions to measurement). 
	
	In \cite{Liu_SparseOptimization_2014}, the authors propose a LD approach (introduced in \cite{Xu_LRD2012} for arbitrary sparse datasets), for temporal SCADA data; specifically, they demonstrate that attacks designed without knowledge of the temporal correlations of the SCADA measurements can be detected by solving an LD problem. Furthermore, their model assumes that while the FDI attack matrix is sparse in each time instant, the attacker attacks a different set of
	measurements. While such a model is quite general, for attacks designed with a specific effect (financial or physical damage), sustaining attacks over time on the same meters can have more impact. Focusing on such sustained attacks, for PMU data, the authors of \cite{Wang_SmartGridComm2014_PMU,Gao2016_PMUIdentification} show that an LD-based detector can identify column sparse FDI
	attack matrix where the column sparsity is a result of the assumption that the attacker attacks the same set of PMU
	measurements over time. 
	
	\textit{Our Contributions}: Following \cite{Wang_SmartGridComm2014_PMU,Gao2016_PMUIdentification}  we model PMU data as a low-rank matrix. Furthermore, focusing on impactful FDI attacks, our attack model involves sustained attacks on the same meters over time, \textit{i.e.}, column sparse attacks (using the nomenclature that rows and columns indicate spatial and temporal data, respectively).	
	Although LD detector shows good performance in detecting column sparse unobservable FDI attacks on both synthetic data and some field PMU data \cite{Wang_SmartGridComm2014_PMU,Gao2016_PMUIdentification}, a question that needs to be addressed is the following: if an attacker has knowledge of the time correlation of the PMU data, can it take advantage of such knowledge and design FDI attacks that can bypass the detector? In this paper, we assume the attacker has the ability to predict the system dynamics, and we introduce a new class of FDI attacks that can bypass the LD detector. These attacks are designed with a convex optimization problem. We prove that the LD detector cannot identify the exact set of states that are modified by the attacker. We demonstrate that such attacks are unobservable for both traditional bad data detectors and the LD detector on both the IEEE 24-bus and IEEE 118-bus systems. 

\section{Preliminaries} \label{sec:SystemModel}

In this section, we introduce the models for the system, FDI attacks, and the LD detector. Throughout, we assume there are $n_b$ buses,
$n_{br}$ branches, $n_g$ generators, and $n_z$ measurements in the
system. 
\subsection{System Model}
 PMUs collect complex bus voltage and branch current measurements. The reporting rate of the PMU measurements is usually 30 times per second \cite{Phadke2009_PMUSE}. These measurements have a linear relationship with the complex bus voltage states. At each time instant $t$, the PMU measurement model can be written as
\begin{equation}
z_t=Hx_t+e_t\label{eq:ACMeasurement}
\end{equation}
where at time instant $t$, $z_t$ is the $n_z \times 1$ measurement vector; $x_t$ is the state vector of complex bus voltage; $e_t$ is an $n_{z}\times1$ noise vector assumed to be composed of independent Gaussian random variables; the complex matrix $H$ is the $n_{z}\times n_{b}$ dependency matrix between measurements and states. Note that the state can be estimated based on PMU measurements via a single weighted least squares (WLS) \cite{Phadke2009_PMUSE}, unlike traditional SCADA-based SE which requires multiple iterations due to the nonlinearity of the measurement function \cite{AburBook}.

One possible way to process PMU data is to collect over a block of time (\textit{e.g.,} 5 to 20 seconds) and then process them as a batch (see for example \cite{Gao2015_LowRankPMU}). We adopt this approach and write the PMU measurements as a matrix where each row vector corresponds to PMU measurements at one time instant and each column vector consists of the measurements collected in the same channel over a period of times. The PMU measurements in \eqref{eq:ACMeasurement} over $N$ time instants can then be collected as
\begin{equation}
Z = XH^T+E \label{eq:MeasurementMatrix}
\end{equation}   
where matrices $Z=\left[z_1^T;\: z_2^T; \hdots ;\:z_N^T\right]$, $X=\left[x_1^T;\: x_2^T;\hdots;x_N^T\right]$, and $E=\left[e_1^T;\: e_2^T;\hdots;e_N^T\right]$ are PMU measurement matrix, state matrix, and noise matrix, respectively. Note that $z_t^T$, $x_t^T$, and $e_t^T$ for $t=1, 2,\hdots,N$ are the transpose of the measurement, state, and noise column vectors, respectively, in \eqref{eq:ACMeasurement}.





\subsection{Unobservable FDI Attack Model}

Assume the attacker has control of the measurements in a subset $\mathcal{S}$ of the network, denoted as the attack subgraph. As in \cite{Liang2015}, we first distinguish between two types of buses in the network: \textit{load buses} that have load directly connected to them, and \textit{non-load buses} with no load. We assume $\mathcal{S}$ is bounded by load buses. The set of measurements in $\mathcal{S}$ are denoted as $\mathcal{J}$. In the absence of noise, an attack is defined to be unobservable if 
\begin{equation}
\bar{Z} = Z+D= Z+CH^T=\left(X+C\right)H^T+E \label{eq:AttackMatrix}
\end{equation}
where $\bar{Z}$ is the $N\times n_z$ post-attack measurement matrix, $D$ is the $N\times n_z$ attacked measurements matrix such that $D=CH^T$, and $C$ is the $N\times n_b$ attack matrix. In the following, we define the set of non-zero columns in a matrix as its column support, written as $supp\left(\cdot\right)$. Note that the attacker is constrained to inject false data only in the measurements in $\mathcal{J}$. Thus, $D$ is a column sparse matrix where $supp\left(D\right)\subseteq\mathcal{J}$. One natural way to form a column spare $D$ is to choose a column sparse $C$.

Prior work \cite{Liu2009,Teixeira2010,Kosut2011,Liang2015,Zhang2016TSG,ZhangPES2016,Chu2016SmartGridComm} considers a special case of \eqref{eq:AttackMatrix} with only one time instant, \textit{i.e.,} $N=1$. These works show that traditional bad data detectors based on measurement residuals cannot detect such FDI attacks.

\subsection{Prior Work: Attack Detection Based on Low-Rank Matrix Decomposition} 
Traditional bad data detectors based on measurement residuals cannot detect the FDI attacks introduced in \eqref{eq:AttackMatrix}. However, exploiting the low-rank nature of the high-dimensional PMU data matrix $Z$, the authors in \cite{Wang_SmartGridComm2014_PMU} propose a new attack detection mechanism based on LD so as to separate the low-rank matrix $Z$ and column sparse matrix $CH^T$ in \eqref{eq:AttackMatrix}. We now briefly review their attack assumptions and detection methodology. 

Given a measurement matrix $\bar{Z}^{\left(\text{LD}\right)}$, the measurement matrix without attack, $Z^{\left(\text{LD}\right)}$, and the attack matrix $C^{\left(\text{LD}\right)}$ can be identified by solving the following convex optimization problem:    
 \begin{flalign}
 \underset{Z^{\left(\text{LD}\right)}\in\mathbb{C}^{N\times n_z},C^{\text{LD}}\in \mathbb{C}^{N\times n_b}}{\text{minimize}} \;\;& \|Z^{\left(\text{LD}\right)}\|_* + \lambda \|C^{\left(\text{LD}\right)}\|_{1,2} \label{eq:LD_OBJ}\\
 \text{subject to} \hspace{0.6cm}& \bar{Z}^{\left(\text{LD}\right)}=Z^{\left(\text{LD}\right)}+C^{\left(\text{LD}\right)}\bar{H}^T \label{eq:LD_UnobservableAttack}
 \end{flalign}
 where $\|Z^{\left(\text{LD}\right)}\|_*$ is the nuclear norm of $Z^{\left(\text{LD}\right)}$; $\|C^{\left(\text{LD}\right)}\|_{1,2}$ is the $l_{1,2}$-norm of $C^{\left(\text{LD}\right)}$, \textit{i.e.,} the sum of $l_2$-norm of columns in $C^{\left(\text{LD}\right)}$; $\lambda$ is a weight factor; and $\bar{H}$ is the normalized dependency matrix, where for each row vector $H_i$, $\bar{H}_i=H_i / \lVert H_i\rVert$.  The objective \eqref{eq:LD_OBJ} is to minimize the rank of $Z^{*\left(\text{LD}\right)}$ (captured by its nuclear norm) and the column sparsity of $C^{*\left(\text{LD}\right)}$ (captured by its $l_{1,2}$-norm). 
 
 After obtaining the optimal solution, $\left(Z^{*\left(\text{LD}\right)},C^{*\left(\text{LD}\right)}\right)$ for \eqref{eq:LD_OBJ}--\eqref{eq:LD_UnobservableAttack}, the set of attacked measurements and states, $supp\left(C^{*\left(\text{LD}\right)}\bar{H}^T\right)$ and $supp\left(C^{\left(*\text{LD}\right)}\right)$, respectively, can be identified as the column support of $C^{*\left(\text{LD}\right)}\bar{H}^T$ and $C^{*\left(\text{LD}\right)}$. Assume there exists unobservable attacks in $\bar{Z}^{\left(\text{LD}\right)}$, such that $\bar{Z}^{\left(\text{LD}\right)}=Z+C\bar{H}^T$. The authors prove that for a specific range of $\lambda$, \textit{i.e.,} $\lambda \in \left[\lambda_\text{min},\;\lambda_\text{max}\right]$, the optimization in \eqref{eq:LD_OBJ} can successfully identify $supp(C)$, \textit{i.e.,} $supp\left(C^{*\left(\text{LD}\right)}\right)=supp\left(C\right)$, under the assumption that every nonzero
 column of $C\bar{H}^T$ does not lie in the column space of $Z$.
%
%
%
%


\section{FDI Attack Exploiting Low-Rank Property of PMU Measurement Matrix} \label{sec:LRattack}
In this section, we introduce a class of FDI attacks that cannot be detected by the LD detector in \eqref{eq:LD_OBJ}--\eqref{eq:LD_UnobservableAttack}. 
	We assume that the attacker has the following knowledge and capabilities:
		\begin{enumerate}
			\item The attacker has full system topology information.
			\item The attacker can perfectly predict the measurements in the following $N$ instances and has the capability to estimate the predicted states. 
			\item The attacker has control of the measurements in a subset $\mathcal{S}$ of the network.
		\end{enumerate}

Given a PMU measurement matrix $Z$ and the potential attacked states set $\mathcal{I}$, we propose the following optimization problem to design FDI attacks:
\begin{flalign}
\underset{C\in \mathbb{C}^{N\times n_b}}{\text{minimize}} \:\;\;& \lVert Z+C\bar{H}^T\rVert_{*} \label{eq:OBJ_AM1}\\
\text{subject to} \;\;& supp(C)\subseteq \mathcal{I} \label{eq:AM1_ColumnSparse} 
\end{flalign}
where $\lVert \cdot\rVert_*$ denotes the nuclear norm.
%
For optimal solution $C^*$, the optimal post-attack measurement matrix denoted as $\bar{Z}^*$ can be written as
\begin{equation}
\bar{Z}^*=Z+C^*\bar{H}^T. \label{eq:optAttackMatrix}
\end{equation}
The goal of the attacker is to ensure that the attacked measurement matrix $\bar{Z}^*$ is low-rank when $Z$ is low-rank. This can be approximated by minimizing the nuclear norm of $\bar{Z}^*$ as \eqref{eq:OBJ_AM1}. 
Constraint \eqref{eq:AM1_ColumnSparse} ensures that the attacker can only attack states in $\mathcal{I}$, \textit{i.e.,} $C^*$ is a column sparse matrix.

In the following, we prove that either $\bar{Z}^*$ bypasses the LD detector (\textit{i.e.,} results in $C^{*(LD)}=0$), or the LD detector identifies at least one measurement as corrupted that is not.

\begin{theorem}
Assume the attack-free measurement matrix $Z$ can bypass the LD detector, \textit{i.e.,} for $\bar{Z}^{\left(\text{LD}\right)}=Z$, $\left(Z^{*\left(\text{LD}\right)},\;C^{*\left(\text{LD}\right)}\right)=\left(Z,\mathbf{0}\right)$.
Assume the solution $C^*$ of \eqref{eq:OBJ_AM1}--\eqref{eq:AM1_ColumnSparse} is non-zero.
Then using $\bar{Z}^*$ in the LD detector, the resulting $C^{*\left(\text{LD}\right)}$ satisfies that either $C^{*\left(\text{LD}\right)}=\mathbf{0}$, or $supp\left(C^{*\left(\text{LD}\right)}\right)\not\subseteq supp\left(C^*\right)$. 
\end{theorem}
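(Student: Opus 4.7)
The plan is to argue by contradiction. Suppose, contrary to the conclusion, that $C^{*(\text{LD})}\neq\mathbf{0}$ and at the same time $\operatorname{supp}\!\left(C^{*(\text{LD})}\right)\subseteq \operatorname{supp}\!\left(C^{*}\right)$. The goal is to build a competitor to $C^{*}$ in the attack problem \eqref{eq:OBJ_AM1}--\eqref{eq:AM1_ColumnSparse} whose objective is strictly smaller, violating optimality of $C^{*}$ (or symmetrically, build a competitor to $(Z^{*(\text{LD})},C^{*(\text{LD})})$ in the LD problem with smaller objective).

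The key construction is $C':=C^{*}-C^{*(\text{LD})}$. Under the standing supposition, $\operatorname{supp}(C')\subseteq \operatorname{supp}(C^{*})\cup \operatorname{supp}(C^{*(\text{LD})})=\operatorname{supp}(C^{*})\subseteq\mathcal{I}$, so $C'$ is feasible for \eqref{eq:OBJ_AM1}--\eqref{eq:AM1_ColumnSparse}. Using $\bar{Z}^{*}=Z+C^{*}\bar{H}^{T}$ and the LD reconstruction $\bar{Z}^{*}=Z^{*(\text{LD})}+C^{*(\text{LD})}\bar{H}^{T}$, I would then observe
\begin{equation}
Z+C'\bar{H}^{T}=\bar{Z}^{*}-C^{*(\text{LD})}\bar{H}^{T}=Z^{*(\text{LD})}.
\end{equation}
Optimality of $C^{*}$ in the attack problem therefore gives
\begin{equation}
\lVert \bar{Z}^{*}\rVert_{*}=\lVert Z+C^{*}\bar{H}^{T}\rVert_{*}\le\lVert Z+C'\bar{H}^{T}\rVert_{*}=\lVert Z^{*(\text{LD})}\rVert_{*}.
\end{equation}

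Now I would turn the inequality around by exploiting the LD problem. The pair $(\bar{Z}^{*},\mathbf{0})$ is trivially feasible in \eqref{eq:LD_OBJ}--\eqref{eq:LD_UnobservableAttack} for the input $\bar{Z}^{*}$, with objective $\lVert\bar{Z}^{*}\rVert_{*}$. Since $(Z^{*(\text{LD})},C^{*(\text{LD})})$ is optimal, combining with the previous display yields
\begin{equation}
\lVert Z^{*(\text{LD})}\rVert_{*}+\lambda\lVert C^{*(\text{LD})}\rVert_{1,2}\le\lVert\bar{Z}^{*}\rVert_{*}\le\lVert Z^{*(\text{LD})}\rVert_{*},
\end{equation}
so $\lambda\lVert C^{*(\text{LD})}\rVert_{1,2}\le 0$. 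Because $\lambda>0$, this forces $C^{*(\text{LD})}=\mathbf{0}$, contradicting the assumption $C^{*(\text{LD})}\ne\mathbf{0}$ and closing the argument.

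I expect the only real subtlety, rather than a true obstacle, to be recognizing the self-cancelling substitution $C':=C^{*}-C^{*(\text{LD})}$: both $C^{*}$ and $C^{*(\text{LD})}$ live in $\mathbb{C}^{N\times n_{b}}$ with identical row space structure through $\bar{H}^{T}$, so their difference stays inside the column-sparse feasible set $\mathcal{I}$ exactly under the support-containment hypothesis we are trying to refute. Everything else is a two-step sandwich of nuclear norms, where the attack optimality supplies the upper bound and the LD optimality supplies the lower bound. The hypothesis that $Z$ itself bypasses the LD detector is not invoked directly in the contradiction but motivates the relevance of the statement, since it guarantees that any detection that does occur on $\bar{Z}^{*}$ is caused by the attack rather than by intrinsic non-low-rank behavior of $Z$.
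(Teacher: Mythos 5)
Your proof is correct and follows essentially the same route as the paper's: the competitor $C^{*}-C^{*\left(\text{LD}\right)}$ in the attack problem \eqref{eq:OBJ_AM1}--\eqref{eq:AM1_ColumnSparse} together with the zero competitor $\left(\bar{Z}^{*},\mathbf{0}\right)$ in the LD problem produces the identical nuclear-norm sandwich forcing $\lambda\lVert C^{*\left(\text{LD}\right)}\rVert_{1,2}\le 0$ and hence $C^{*\left(\text{LD}\right)}=\mathbf{0}$. The only cosmetic differences are that you phrase the argument as a contradiction and correctly note that the hypothesis that $Z$ itself bypasses the LD detector is not actually needed for this chain (the paper routes through $\lVert Z\rVert_{*}=\lVert Z^{*\left(\text{LD}\right)}\rVert_{*}$, but that link is dispensable).
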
 
\begin{proof}
First, we prove that for a given $Z$, $\lVert \bar{Z}^* \rVert_*\leq \lVert Z \rVert_*$.
For a given $Z$, $C=\mathbf{0}$ is always a feasible solution for \eqref{eq:OBJ_AM1}--\eqref{eq:AM1_ColumnSparse}. For $C=\mathbf{0}$, the objective $\lVert \bar{Z}\rVert_*=\lVert Z+C\bar{H}^T\rVert_*=\lVert Z\rVert_*$. Since we minimize \eqref{eq:OBJ_AM1}, the objective of $C^*$ is less than or equal to that of the feasible solution $\mathbf{0}$. That is, $\lVert \bar{Z}^*\rVert_*\leq \lVert Z \rVert_*$ always holds. 

Suppose $Z$ can bypass LD detector. That is, for input $\bar{Z}^{\left(\text{LD}\right)}=Z$,  $\left(Z^{*\left(\text{LD}\right)},\; C^{*\left(\text{LD}\right)}\right)=\left(Z,\mathbf{0}\right)$.
As we just proved
\begin{equation}
\lVert Z+C^*\bar{H}^T\rVert_*=\lVert \bar{Z}^*\rVert_*\leq \lVert Z\rVert_*=\lVert Z^{*\left(\text{LD}\right)}\rVert_*.
\end{equation} 
Thus,
\begin{equation}
\lVert Z+C^*\bar{H}^T\rVert_*\leq \lVert Z\rVert_*\leq \lVert Z\rVert_*+\lambda \lVert C^* \rVert_{1,2}.
\end{equation} 

Let $C^{*\left(\text{LD}\right)}$ be the optimal solution of the LD detector for $\bar{Z}^{\left(\text{LD}\right)}=\bar{Z}^*$. The objective \eqref{eq:LD_OBJ} for $\bar{Z}^*$ satisfies
\begin{equation}
\|\bar{Z}^*-C^{*\left(\text{LD}\right)}\bar{H}^T\|_*   + \lambda \|C^{*\left(\text{LD}\right)}\|_{1,2}\leq \lVert \bar{Z}^*\rVert_*\leq \lVert Z\rVert_*.\label{eq:proof1}
\end{equation}         	
Note that $\lVert \bar{Z}^* -C^{*\left(\text{LD}\right)}\bar{H}^T\rVert_*$ can be rewritten as $\lVert Z +\left(C^*-C^{*\left(\text{LD}\right)}\right)\bar{H}^T\rVert_*$.  

If $supp\left(C^{*\left(\text{LD}\right)}\right)\subseteq\mathcal{I}$, then 
		 \begin{equation}
		 \lVert Z +\left(C^*-C^{*\left(\text{LD}\right)}\right)\bar{H}^T\rVert_*\geq \lVert Z +C^*\bar{H}^T\rVert_*
		 \end{equation}
		 since $C^*$ is the optimal solution for \eqref{eq:OBJ_AM1}--\eqref{eq:AM1_ColumnSparse}. That is, $\bar{Z}^*$ and $C^{*\left(\text{LD}\right)}$ satisfy
		 \begin{equation}
		 \lVert \bar{Z}^*-C^{*\left(\text{LD}\right)}\bar{H}^T\rVert_*+\lambda\lVert C^{*\left(\text{LD}\right)}\rVert_{1,2} \geq \lVert \bar{Z}^*\rVert_*. \label{eq:proof2}
		 \end{equation}
		Therefore, the only solution that can satisfy both \eqref{eq:proof1} and \eqref{eq:proof2} is $C^{*\left(\text{LD}\right)}=\mathbf{0}$.
\end{proof}

 \section{Numerical Results} \label{sec:Simulation}
 In this section, we illustrate the efficacy of the unobservable FDI attacks introduced in Sec. \ref{sec:LRattack}. To this end, we first solve the attack optimization problem in \eqref{eq:OBJ_AM1}--\eqref{eq:AM1_ColumnSparse} to find the optimal attack matrix $C^*$. Subsequently, we construct the post-attack measurement matrix $\bar{Z}^*$ with $C^*$ as \eqref{eq:optAttackMatrix}. Finally, we solve the LD detection optimization problem \eqref{eq:LD_OBJ}--\eqref{eq:LD_UnobservableAttack} for $\bar{Z}^*$ to check if the attack matrix $C^*$ is detected. Throughout, we assume that the LD detector selects 2 seconds worth of PMU measurements data, \textit{i.e.,} $N=60$, while the attacker injects bad data. The test systems include the IEEE 24-bus reliability test system (RTS) and IEEE 118-bus system. The convex optimization problems for LD detection and attack design are solved with MOSEK. In the LD detection optimization problem, the weight $\lambda$ is chosen to be 1.05 for both the IEEE 24-bus and the IEEE 118-bus systems. 
 \begin{figure}[h]
 	\centering{}\includegraphics[trim=0 1cm 0 0.5cm,scale=0.38]{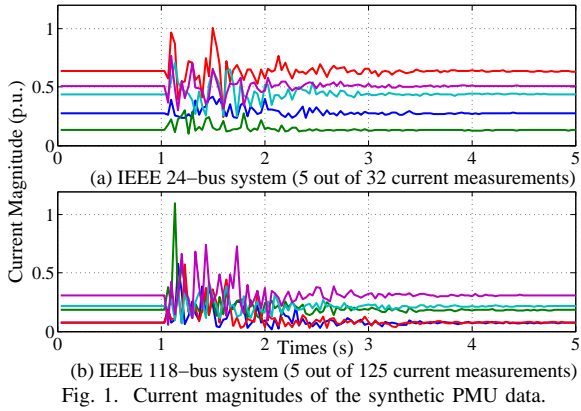}\protect\protect	\caption{Current magnitudes of the synthetic PMU data.\label{fig:LowRankMatrix}}
 \end{figure}
 \begin{figure}[h]
 	\centering{}\includegraphics[trim=0 1cm 0 1cm,scale=0.38]{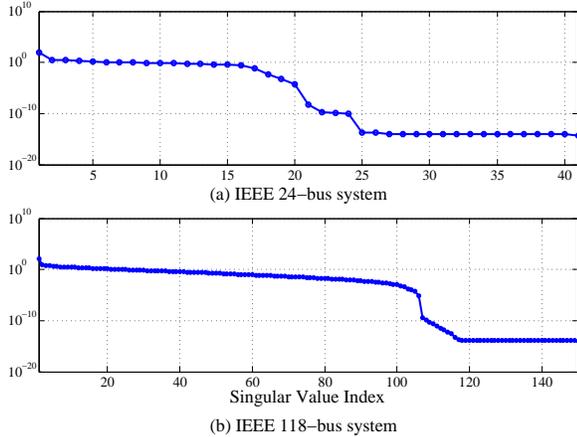}\protect\protect	\caption{Singular values of the synthetic PMU data matrix in decreasing order.\label{fig:SingularValue}}
 \end{figure}
 
 We assume both test systems are fully observable with PMU measurements. This is achieved by solving an optimal PMU placement problem as introduced in \cite{PMUplacement2004}. The details of the PMU locations and measurements for both test systems are summarized in Table \ref{tab:PMU_place}. In \cite{Wang_SmartGridComm2014_PMU}, the authors demonstrate an actual PMU dataset, which we do not have access to. Therefore, to make a fair comparison, we generate synthetic PMU data over 5 seconds in each test system. To model realistic data with a disturbance, at the first time instant $t$ after 1 second, we change the load at each bus by adding a random value $d$ to the base load, such that $d\sim \mathcal{N}\left(0,\frac{60}{1.1^{\left(t-31\right)}}\right)$. We then solve an AC power flow to obtain the measured phasors of bus voltage and branch current as measurements at time instant $t$. The resulting synthetic measurements for the IEEE 24-bus system and the IEEE 118-bus system are partially illustrated in Fig \ref{fig:LowRankMatrix}. The singular values for the synthetic measurement matrices for the IEEE 24-bus system and the IEEE 118-bus system are illustrated in Fig \ref{fig:SingularValue}. It can be seen that these synthetic measurements have the same low-rank property as the actual PMU data as illustrated in \cite{Wang_SmartGridComm2014_PMU}. The synthetic data is then broken into two parts for testing, one for $t=$1--3 seconds, the other for $t=$3--5 seconds. Observe that the measurement matrix for $t$=1--3 seconds has more disturbances than that for $t$=3--5. Furthermore, we assume noiseless measurements, \textit{i.e.}, $E=\mathbf{0}$ in \eqref{eq:AttackMatrix}.
  \begin{table}[h]
  	\renewcommand{\arraystretch}{1.3}
  	\protect\caption{Statistic results of $\lVert \bar{Z}^*\rVert_*$ in the IEEE 118-bus system.  \label{tab:Statistics}}
  	\centering
  	\begin{tabular}{c|>{\centering}p{1cm}|>{\centering}p{1cm}|>{\centering}p{1cm}|>{\centering\arraybackslash}p{0.5cm}}
  		\thickhline
  		\multirow{2}{0.8cm}{\centering Time Period} & \multicolumn{3}{c|}{$\lVert \bar{Z}^*\rVert$} & \multirow{2}{*}{$\lVert Z\rVert$}  \\
  		\hhline{~---~}
  		& Min & Max & Ave& \\
  		\hline
  		1--3 second & 116.1  & 116.7 & 116.5 & 116.8 \\
  		\hline
  		3--5 second& 56.9 & 57.1 & 57.0& 57.1\\
  		\thickhline 
  	\end{tabular}
  \end{table}
  \begin{figure}[h]
  	\centering{}\includegraphics[trim=0 0.7cm 0 1cm,scale=0.4]{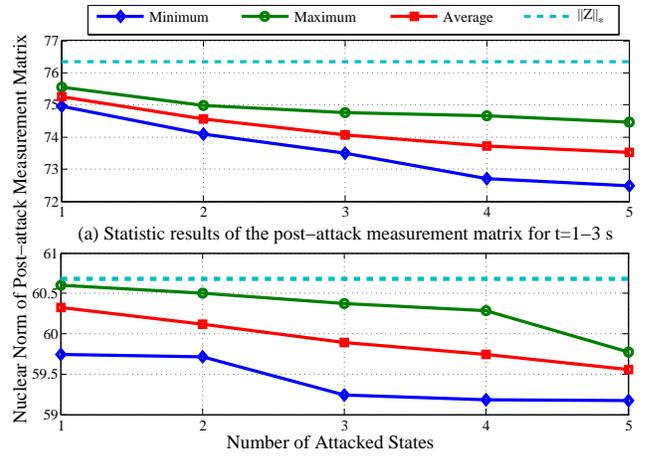}\protect\protect	\caption{Statistic results of $\lVert \bar{Z}^*\rVert_*$ in the IEEE 24-bus system. \label{fig:Statistics}}
  \end{figure}

 We exhaustively generate the unobservable attacks with all potential attacked state sets $\mathcal{I}$ for which $1\leq |\mathcal{I}| \leq 5$ in the IEEE 24-bus system; for tractability we consider only $|\mathcal{I}|=1$ in the IEEE 118-bus system. Specifically, as observed in our prior work \cite{Liang2015,Zhang2016TSG}, unobservable attacks must be designed inside a subgraph $\mathcal{S}$ which is bounded by load buses. In $\mathcal{S}$, the states of all the non-bounded buses (including load and non-load buses) have to be changed. In this paper, the attacked state sets $\mathcal{I}$ are selected according to this rule.

 \begin{figure}[h]
 	\centering{}\includegraphics[trim=0 0.7cm 0 1cm,scale=0.4]{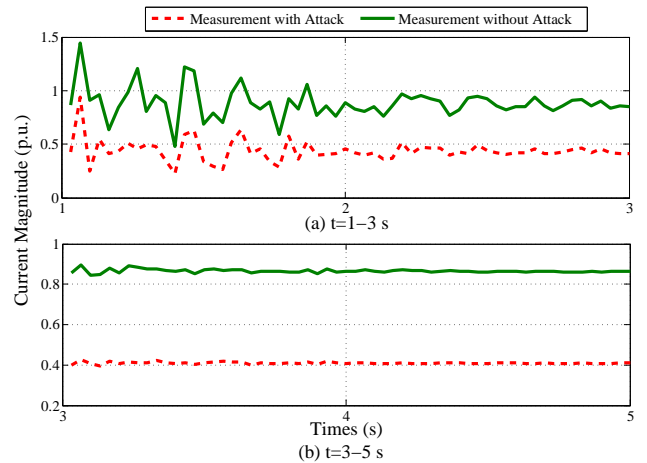}\protect\protect	\caption{Magnitude of the from side current measurement on line 12 in the IEEE 24-bus system with $\mathcal{I}=\{8\}$. \label{fig:example}}
 \end{figure}
 
   For every attack we tested, the LD detector is completely bypassed, \textit{i.e.,} $C^{*\left(\text{LD}\right)}=\mathbf{0}$.
   We summarize the statistic results including maximum, minimum, and average values of  $\lVert \bar{Z}^* \rVert_*$ for the IEEE 24-bus system and the IEEE 118-bus system in Fig. \ref{fig:Statistics} and Table \ref{tab:Statistics}, respectively.
  From these results, it can be seen that for every attack we tested, $\lVert \bar{Z}^* \rVert_* \leq \lVert Z \rVert_*$ always holds. In addition, in  Fig. \ref{fig:Statistics}, we also find that for the IEEE 24-bus system, $\lVert \bar{Z}^* \rVert_*$ gradually decreases as the number of attacked state increases.  
 
  We now illustrate a typical case in the IEEE 24-bus system in with $\mathcal{I}=\{8\}$. In Fig. \ref{fig:example}, the magnitudes of the from side current measurement on line 12 are demonstrated, before and after attack. From this case, we can see that the designed attack accurately captures the temporal correlation of the PMU measurements.

  These observations are consistent with Theorem 1. In fact, they are stronger than Theorem 1 since we did not find any case where the LD detector results in $C^{*\left(\text{LD}\right)}$ such that  $C^{*\left(\text{LD}\right)}\neq \mathbf{0}$ and  $supp\left(C^{*\left(\text{LD}\right)}\right)\not\subseteq \mathcal{I}$.
  
 \begin{table}[h]
 	\renewcommand{\arraystretch}{1.2}
 	\protect\caption{Monitored PMU measurements in both the IEEE 24-bus system and the IEEE 118-bus system.  \label{tab:PMU_place}}
 	\centering
 	\begin{tabular}{>{\centering}p{0.3cm}|>{\centering}p{1.5cm}|>{\centering\arraybackslash}p{5.8cm}}
 		\thickhline
 		\multirow{3}{*}{\rotatebox[origin=c]{90}{\parbox[c]{2.45cm}{\centering IEEE 24-bus System}}}  & Voltage (Buses with PMU) & 1, 2, 7, 9, 10, 11, 15, 17, 20 \\
 		\hhline{~--}
 		& Current (From Side) & 1, 2, 3, 4, 5, 11, 14, 15, 16, 17, 18, 19, 24, 25, 26, 27, 30, 31, 36, 37\\
 		\hhline{~--}
 		& Current (To Side) & 1, 6, 8, 9, 10, 12, 13, 14, 16, 28, 34, 35\\
    	\hline
 		\multirow{3}{*}{\centering \rotatebox[origin=c]{90}{\parbox[c]{4cm}{IEEE 118-bus System}}}  & Voltage (Buses with PMU) & 2, 5, 10, 12, 15,    17,    21,    25,    29,    34,    37,    41,    45,    49,    53,    56,    62,    64,    72,    73,    75,    77,    80,
 		85,    87,    91,    94,   101,   105,   110,   114,   116\\
 		\hhline{~--}
 		& Current (From Side) & 5,	11,	13,	17,	20,	21,	23,	26,	28,	33,	39,	40,	44,	49,	50,	52,	53,	58,	60,	62,	68,	70,	71,	74,	75,	76,	80,	82,	85,	86,	95,	97,	98,	99,	100,	101,	106,	120,	121,	123,	124,	128,	133,	135,	136,	143,	147,	148,	150,	151,	152,	153,	155,	162,	169,	170,	171,	176,	177,	178,	182,	184,	185\\
 		\hhline{~--}
 		& Current (To Side) & 1,	3,	4,	8,	9,	12,	13,	14,	15,	18,	19,	21,	22,	27,	31,	32,	35,	36,	45,	47,	48,	50,	51,	56,	61,	65,	66,	67,	68,	69,	73,	78,	79,	91,	92,	94,	111,	112,	113,	115,	116,	117,	118,	119,	120,	123,	124,	125,	127,	131,	132,	134,	140,	145,	146,	160,	166,	168,	174,	175,	180,	183\\
 		\thickhline 
 	\end{tabular}
 \end{table}
 
	\section{Concluding Remarks}
In this paper, we have studied the vulnerability of phasor measurement units to FDI attacks. Prior work demonstrated that unobservable FDI attacks that can bypass traditional bad data detectors based on measurement residuals can be identified by the LD detector. In this work, we have shown that a more sophisticated attacker that understands the temporal correlation of PMU data can exploit it to design unobservable FDI attacks that cannot be detected by the LD detector. Future work involves developing countermeasures for such attacks.

	\section*{Acknowledgment}
	This material is based upon work supported by the National Science Foundation under Grant No. CNS-1449080.
	

	%
	%

	
	
	%
	%
	%
	
	\bibliographystyle{IEEEtran}
	\bibliography{dis}

\end{document}